\renewcommand{\appendixsectionformat}[2]{Proofs of Theorems\label{sec:proofs}}
\newtheorem{definition}{Definition}
\tikzstyle{gate}=[shape=rectangle, text height=1.5ex, text depth=0.25ex, yshift=0.5mm, fill=white, draw=black, minimum height=3mm, yshift=-0.5mm, minimum width=3mm, font={\small}, tikzit category=circuit, inner sep=2pt]
\tikzstyle{big gate}=[shape=rectangle, text height=1.5ex, text depth=0.25ex, yshift=0.5mm, fill=white, draw=black, minimum height=10mm, yshift=-0.5mm, minimum width=5mm, font={\small}, tikzit category=circuit]
\tikzstyle{Z dot}=[inner sep=0mm, minimum size=2mm, shape=circle, draw=black, fill={rgb,255: red,221; green,255; blue,221}, tikzit category=zx]
\tikzstyle{Z phase dot}=[minimum size=5mm, font={\footnotesize\boldmath}, shape=rectangle, rounded corners=2mm, inner sep=0.2mm, outer sep=-2mm, scale=0.8, tikzit shape=circle, draw=black, fill={rgb,255: red,221; green,255; blue,221}, tikzit draw=blue, tikzit category=zx]
\tikzstyle{X dot}=[Z dot, shape=circle, draw=black, fill={rgb,255: red,255; green,136; blue,136}, tikzit category=zx]
\tikzstyle{X phase dot}=[Z phase dot, tikzit shape=circle, tikzit draw=blue, fill={rgb,255: red,255; green,136; blue,136}, font={\footnotesize\boldmath}, tikzit category=zx]
\tikzstyle{hadamard}=[fill=yellow, draw=black, shape=rectangle, inner sep=0.6mm, minimum height=1.5mm, minimum width=1.5mm, tikzit category=zx]
\tikzstyle{paulibox}=[fill={rgb,255: red,221; green,221; blue,255}, draw=black, shape=rectangle, inner sep=0.6mm, minimum height=5mm, minimum width=5mm, font={\footnotesize}, text height=1.5ex, text depth=0.25ex, tikzit category=zx]
\tikzstyle{vertex}=[inner sep=0mm, minimum size=1mm, shape=circle, draw=black, fill=black, tikzit category=misc]
\tikzstyle{vertex set}=[inner sep=0mm, minimum size=1mm, shape=circle, draw=black, fill=white, font={\footnotesize\boldmath}, tikzit category=misc]
\tikzstyle{small black dot}=[fill=black, draw=black, shape=circle, inner sep=0pt, minimum width=1.2mm, tikzit category=circuit]
\tikzstyle{cnot ctrl}=[fill=black, draw=black, shape=circle, inner sep=0pt, minimum width=1.2mm, tikzit category=circuit]
\tikzstyle{cnot targ}=[fill=white, draw=white, shape=circle, tikzit category=circuit, label={center:$\oplus$}, inner sep=0pt, minimum width=2.1mm, tikzit fill={rgb,255: red,102; green,204; blue,255}, tikzit draw=black]
\tikzstyle{ket}=[fill=white, draw=black, shape=regular polygon, regular polygon sides=3, regular polygon rotate=-30, scale=0.7, inner sep=1pt, tikzit category=circuit, tikzit shape=rectangle, tikzit fill=green]
\tikzstyle{bra}=[fill=white, draw=black, shape=regular polygon, regular polygon sides=3, regular polygon rotate=30, scale=0.7, inner sep=1pt, tikzit category=circuit, tikzit shape=rectangle, tikzit fill=red]
\tikzstyle{scalar}=[shape=rectangle, text height=1.5ex, text depth=0.25ex, yshift=0.5mm, fill=white, draw=black, minimum height=5mm, yshift=-0.5mm, minimum width=5mm, font={\small}]
\tikzstyle{clabel}=[fill=white, draw=none, shape=rectangle, tikzit fill={rgb,255: red,56; green,255; blue,242}, font={\footnotesize}, inner sep=1pt, tikzit category=labels]
\tikzstyle{empty diagram}=[draw={gray!40!white}, dashed, shape=rectangle, minimum width=1cm, minimum height=1cm, tikzit category=misc]
\tikzstyle{amap}=[fill=white, draw=black, shape=NEbox, tikzit category=asymmetric, tikzit fill=yellow, tikzit shape=rectangle]
\tikzstyle{amap conj}=[fill=white, draw=black, shape=NWbox, tikzit category=asymmetric, tikzit fill=green, tikzit shape=rectangle]
\tikzstyle{amap adj}=[fill=white, draw=black, shape=SEbox, tikzit category=asymmetric, tikzit fill=red, tikzit shape=rectangle]
\tikzstyle{amap trans}=[fill=white, draw=black, shape=SWbox, tikzit category=asymmetric, tikzit fill=orange, tikzit shape=rectangle]
\tikzstyle{astate}=[fill=white, draw=black, shape=NEtriangle, tikzit category=asymmetric, tikzit shape=circle, tikzit fill=yellow]
\tikzstyle{astate conj}=[fill=white, draw=black, shape=NWtriangle, tikzit category=asymmetric, tikzit shape=circle, tikzit fill=green]
\tikzstyle{astate adj}=[fill=white, draw=black, shape=SEtriangle, tikzit category=asymmetric, tikzit shape=circle, tikzit fill=red]
\tikzstyle{astate trans}=[fill=white, draw=black, shape=SWtriangle, tikzit category=asymmetric, tikzit shape=circle, tikzit fill=orange]
\tikzstyle{white dot}=[inner sep=0mm, minimum size=2mm, shape=circle, draw=black, fill={rgb,255: red,250; green,250; blue,250}]
\tikzstyle{white phase dot}=[minimum size=5mm, font={\footnotesize\boldmath}, shape=rectangle, rounded corners=2mm, inner sep=0.2mm, outer sep=-2mm, scale=0.8, tikzit shape=circle, draw=black, fill={rgb,255: red,250; green,250; blue,250}, tikzit draw=blue]
\tikzstyle{hbox}=[shape=rectangle, text height=2mm, fill={rgb,255: red,255; green,235; blue,61}, draw=black, minimum height=2mm, minimum width=2mm, font={\small}, tikzit category=zh, inner sep=0pt, rounded corners=0.5mm]
\tikzstyle{Z dot (zh)}=[inner sep=0mm, minimum size=2mm, shape=circle, draw=black, fill={rgb,255: red,250; green,250; blue,250}, tikzit category=zh]
\tikzstyle{X dot (zh)}=[Z dot, shape=circle, draw=black, fill={rgb,255: red,193; green,193; blue,193}, tikzit category=zh]
\tikzstyle{triangle}=[fill={rgb,255: red,255; green,136; blue,136}, draw=black, shape=isosceles triangle, isosceles triangle apex angle=60, minimum size=2.5mm, inner sep=0mm]
\tikzstyle{labelled hbox}=[shape=rectangle, text height=1.75ex, text depth=0.5ex, fill={rgb,255: red,255; green,235; blue,61}, draw=black, minimum height=3mm, minimum width=4mm, font={\small}, tikzit category=zh, inner sep=1.3pt, rounded corners=0.5mm]
\tikzstyle{Z phase dot (zh)}=[Z phase dot, tikzit shape=circle, tikzit draw=blue, fill={rgb,255: red,250; green,250; blue,250}, font={\footnotesize\boldmath}, tikzit category=zh]
\tikzstyle{X phase dot (zh)}=[Z phase dot, tikzit shape=circle, tikzit draw=blue, fill={rgb,255: red,193; green,193; blue,193}, font={\footnotesize\boldmath}, tikzit category=zh]
\tikzstyle{W node}=[fill=black, draw=black, shape=regular polygon, regular polygon sides=3, minimum size=2mm]
\tikzstyle{Z dot (zw)}=[fill=white, draw=black, shape=circle, minimum width=1.2mm, inner sep=0pt]
\tikzstyle{Z phase dot XL}=[Z phase dot, fill={rgb,255: red,250; green,250; blue,250}, draw=black, shape=circle, tikzit draw={rgb,255: red,191; green,0; blue,64}, tikzit shape=circle, font={\large\boldmath}, inner sep=0.0mm]
\tikzstyle{hadamard edge}=[-, dashed, dash pattern=on 2pt off 0.5pt, thick, draw={rgb,255: red,68; green,136; blue,255}]
\tikzstyle{box edge}=[-, dashed, dash pattern=on 2pt off 0.5pt, thick, draw={rgb,255: red,203; green,192; blue,225}]
\tikzstyle{brace edge}=[-, tikzit draw=blue, decorate, decoration={brace,amplitude=1mm,raise=-1mm}]
\tikzstyle{diredge}=[->, thick]
\tikzstyle{double edge}=[-, double, shorten <=-1mm, shorten >=-1mm, double distance=2pt]
\tikzstyle{gray edge}=[-, {gray!60!white}]
\tikzstyle{pointer edge}=[->, very thick, gray]
\tikzstyle{boldedge}=[-, line width=1.0pt, shorten <=-0.17mm, shorten >=-0.17mm]
\tikzstyle{bidir edge}=[<->, very thick, draw={rgb,255: red,191; green,191; blue,191}]
\tikzstyle{purple edge}=[->, thick, draw={rgb,255: red,225; green,117; blue,216}]
\tikzstyle{green edge}=[->, thick, draw={rgb,255: red,167; green,231; blue,137}]
\tikzstyle{orange edge}=[->, thick, draw={rgb,255: red,245; green,170; blue,63}]
\tikzstyle{blue edge}=[->, thick, draw={rgb,255: red,68; green,136; blue,255}]
\tikzstyle{any edge}=[->, thick, draw=cyan]
\tikzstyle{red edge}=[->, thick, draw={rgb,255: red,255; green,136; blue,136}]
\tikzstyle{bidiredge}=[<->, thick]
\tikzstyle{dashed diredge}=[->, dashed, dash pattern=on 1pt off 0.5pt]
\tikzstyle{bidashed diredge}=[<->, dashed, dash pattern=on 1pt off 0.5pt]
\newcommand{\Substack}[1]{\hbox to 0pt{\hss $\substack{ #1}$\hss}}
\begin{document}

\title{A Graphical \#SAT Algorithm for Formulae with Small Clause Density}

\author{Tuomas Laakkonen$^{1, a}$, Konstantinos Meichanetzidis$^{1, b}$, John van de Wetering $^{2, c}$
\institute{$^{1}$ Quantinuum, 17 Beaumont Street, Oxford OX1 2NA, United Kingdom\\
$^{2}$ Informatics Institute, University of Amsterdam, 1098 XH Amsterdam, The Netherlands}
\email{$^{a, b}$ \{tuomas.laakkonen, k.mei\}@quantinuum.com, $^{c}$ john@vdwetering.name}
}

\def\titlerunning{A Graphical \#SAT Algorithm for Formulae with Small Clause Density}
\def\authorrunning{T. Laakkonen, K. Meichanetzidis \& J. van de Wetering}

\begin{toappendix}
    \section{Rewriting Rules}
    \label{sec:zhrules}

    The ZH-calculus \cite{backens_zh_2019} has the following rewriting rules:
    \begin{equation*}
        \tikzfig{qpl-zhcalculusrules}
    \end{equation*}
    We will also used some derived rules. In particular, we have that
    \begin{equation}
        \label{satembed_zerostate}
        \tikzfig{satembed-zerostate}
    \end{equation}
    and also that:
    \begin{equation}
        \label{satembed_copy}
        \tikzfig{satembed-copy}
    \end{equation}
    Finally, we recall the presentation of the Z-spider as a sum over basis states:
    \begin{equation}
        \label{zspidersum}
        \tikzfig{zhcalculusgen-zspidersum}
    \end{equation}

\section{Upper Bounds on Quantum Circuit Simulation}
\label{sec:circuit-sim}

Extending the ideas from Section \ref{sec:SAT-phases}, we can produce an upper bound on the time required to exactly compute the amplitude of a quantum computation. This task is naturally represented in terms of tensor networks \cite{orus_practical_2014}, which can be presented in terms of a graphical calculus, see \cite[page 6]{vandewetering2020zxcalculus} for an example in the ZX-calculus. Note that while this task is easily shown to be \sP-hard, it is not obviously in \sP and so it is hard to relate directly to \sSAT. By relaxing our definition of \sSAT to $\sSAT_\mathbb{C}$, we can do this translation more straightforwardly, but still apply existing knowledge about \sSAT.

Specifically, given a quantum circuit $C$ with $n$ qubits and $G$ gates representing a unitary $U$ over the CZ, Hadamard, and Z-phase gate set, the quantity $\bra{+}U\ket{+}$ can be represented by the following ZH-diagram:
\begin{equation}
    \tikzfig{simulation-diagram}
\end{equation}
Where the portion marked $C$ is assembled from the following components
\begin{equation}
    \tikzfig{simulation-gates}
\end{equation}
by placing them according to their connections in the quantum circuit. However, we can translate these to $\stwoSAT_\mathbb{C}$ diagrams, up to scalar factors, by applying the following rewrite:
\begin{equation}
    \tikzfig{simulation-had}
\end{equation}
To map $\bra{+}U\ket{+}$ into this form, we apply this translation and then spider fusion wherever possible. This leaves a $\stwoSAT_\mathbb{C}$ diagram and a scalar factor. Therefore, to compute the value of $\bra{+}U\ket{+}$, we can compute the model count of the corresponding $\stwoSAT_\mathbb{C}$ instance by $\mathrm{CDP}_\mathbb{C}$ and multiply it by the scalar factor. Since Hadamard gates and CZ gates each add two clauses, and Z-phase gates add no clauses, we have $m \leq 2G$. Hence, by applying the $O^*(1.1740^m)$ bound for \stwoSAT by Wang and Gu \cite{wang_worst_2013}, we can calculate $\bra{+}U\ket{+}$ in time $O^*(1.3783^G)$. This is better than statevector simulation whenever $G \leq 2.16n$.

It is also possible to work with circuits that contain $\textbf{C}^k\textbf{Z}$ gates directly, since we have the following translation, as before:
\begin{equation}
    \tikzfig{simulation-ccz}
\end{equation}
Then each gate adds at most $k + 1$ clauses to the \stwoSAT instance, leading to an overall runtime of $O^*(1.1740^{(k + 1)G})$. For instance, for $k = 2$ this is $O^*(1.6181^G)$, which is substantially better than the corresponding runtime given by decomposing CCZ gates into single- and two-qubit gates (which would be $O^*(6.8552^G)$, as each CCZ would require twelve clauses). It may be possible to give better bounds by analyzing the spider fusion that occurs in this translation and applying bounds for \stwoSAT in terms of variables rather than clauses, but we postpone this to future work. 

\section{A Non-diagrammatic Argument for $\mathrm{CDP}_\pm^{\to 2}$}
\label{app:nodiagrams}

Suppose we have a function $\phi : \mathbb{B}^n \to \mathbb{B}$ given by
$$\phi(x_1, \dots, x_n) = \bigwedge_{i = 1}^{m} (c_{i1} \lor c_{i2} \lor \cdots \lor c_{ik_i})$$
then we can define $\phi' : \mathbb{B}^{n+m} \to \{-1, 0, 1\}$ given by:
$$\phi'(x_1, \dots, x_n, y_1 \dots, y_m) = (-1)^{\sum_i y_i}\left[\bigwedge_{i = 1}^{m} \bigwedge_{j = 1}^{k_i} (\lnot c_{ij} \lor \lnot y_{i})\right] = (-1)^{\sum_i y_i}\left[\bigwedge_{i = 1}^{m} \lnot \left(y_{i} \land \bigvee_{j = 1}^{k_i} c_{ij}\right)\right]$$
Lifting Boolean logic to integer arithmetic, we have that
\begin{align*}
    \phi'(x_1, \dots, x_n, y_1 \dots, y_m) &= (-1)^{\sum_i y_i} \prod_{i = 1}^{m} \left(1 - y_i \left[\bigvee_{j=1}^{k_i}c_{ij}\right]\right) = (-1)^{\sum_i y_i} \sum_{S \subseteq [1, m]} \prod_{i \in S} (-y_i)\left[\bigvee_{j=1}^{k_i}c_{ij}\right] \\
    &= \sum_{S \subseteq [1, m]} (-1)^{|S| + \sum_{i} y_i} \left[\bigwedge_{i \in S} y_i\right] \prod_{i \in S}\left[\bigvee_{j=1}^{k_i}c_{ij}\right] \\
    &=  \sum_{S \subseteq [1, m]} (-1)^{\sum_{i \notin S} y_i} \left[\bigwedge_{i \in S} y_i\right] \prod_{i \in S}\left[\bigvee_{j=1}^{k_i}c_{ij}\right]
\end{align*}
and summing over all possibilities for $y_i$,
\begin{align*}
    \sum_{y_1, \dots, y_m \in \mathbb{B}} \phi'(x_1, \dots, x_n, y_1 \dots, y_m) &= \sum_{S \subseteq [1, m]} \prod_{i \in S}\left[\bigvee_{j=1}^{k_i}c_{ij}\right] \sum_{y_1, \dots, y_m \in \mathbb{B}} (-1)^{\sum_{i \notin S} y_i} \left[\bigwedge_{i \in S} y_i\right] \\ 
    &= \sum_{S \subseteq [1, m]} \prod_{i \in S}\left[\bigvee_{j=1}^{k_i}c_{ij}\right] \cdot \begin{cases} 1 & |S| = m \\ 0 & |S| < m\end{cases} = \left[\bigwedge_{i=1}^m \bigvee_{j=1}^{k_i} c_{ij}\right] = \phi(x_1, \dots, x_n)
\end{align*}
therefore, to compute $\#(\phi)$, it is sufficient to sum over all values of $\phi'$:
$$\#(\phi) = \sum_{x_1, \dots, x_n \in \mathbb{B}} \phi(x_1, \dots, x_n) = \sum_{x_1, \dots, x_n, y_1, \dots, y_m \in \mathbb{B}} \phi'(x_1, \dots, x_n, y_1, \dots, y_m)$$
Finally, we can see that
\begin{align*}
    \#(\phi) &= \sum_{x_1, \dots, x_n, y_1, \dots, y_m \in \mathbb{B}}  (-1)^{\sum_i y_i}\left[\bigwedge_{i = 1}^{m} \bigwedge_{j = 1}^{k_i} (\lnot c_{ij} \lor \lnot y_{i})\right] \\
    &= \sum_{\substack{x_1, \dots, x_n, y_1, \dots, y_m \in \mathbb{B} \\ \sum_i y_i \text{ even}}}  \left[\bigwedge_{i = 1}^{m} \bigwedge_{j = 1}^{k_i} (\lnot c_{ij} \lor \lnot y_{i})\right] - \sum_{\substack{x_1, \dots, x_n, y_1, \dots, y_m \in \mathbb{B} \\ \sum_i y_i \text{ odd}}} \left[\bigwedge_{i = 1}^{m} \bigwedge_{j = 1}^{k_i} (\lnot c_{ij} \lor \lnot y_{i})\right] \\
    &= \sSAT_\pm\left(\bigwedge_{i = 1}^{m} \bigwedge_{j = 1}^{k_i} (\lnot c_{ij} \lor \lnot y_{i}), \{y_1, \dots, y_m\}\right)
\end{align*}
by definition, and thus this defines a reduction from \sSAT to $\stwoSAT_\pm$ which can be computed using Algorithm \ref{dpllpm}. 

The argument that Wahlstr\"om's bound \cite{wahlstrom_tighter_2008} can be applied to Algorithm \ref{dpllpm} is the same as in Section \ref{subsec:reduction}: since the structure of the algorithm remains the same (regardless of scalar factors, unit propagation and branching still generate the same structures), and because unrelated instances combine multiplicatively in that
$$\sSAT_\pm(f_1(x_1, \dots, x_n) \land f_2(y_1, \dots, y_m), N_1 \cup N_2) = \sSAT_\pm(f_1(x_1, \dots, x_n), N_1)\sSAT_\pm(f_2(y_1, \dots, y_m), N_2)$$
then the bound doesn't distinguish between Algorithm \ref{cdp_algo} and \ref{dpllpm}, so it applies directly.

\section{Additional Lemmas}

\begin{lemma}
    \label{satembed_postsel}
    The following diagrammatic equivalence holds:
    \begin{equation}
        \tikzfig{satembed-postsel-statement}
    \end{equation}
\end{lemma}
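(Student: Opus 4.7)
The plan is to verify the equivalence directly by expanding both sides using the sum-over-basis-states presentation of the Z-spider from \eqref{zspidersum} and then reducing with the ZH rewriting rules together with the two derived identities \eqref{satembed_zerostate} and \eqref{satembed_copy}, which already encapsulate how the SAT-embedding interacts with a $\ket{0}$ and with a copy spider. The name \emph{postsel} suggests the statement asserts that a certain postselection effect attached to the embedded clause reduces to a much simpler diagram (typically either an H-box, an identity wire, or a Z/X-spider times a scalar factor), so the aim is to show that the two sides agree on every basis state.

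First, I would apply \eqref{zspidersum} to expand any Z-spider in the diagram, turning the postselection into a sum indexed by Boolean assignments $x \in \{0,1\}$ of the open input wires. For each branch of the sum, the postselected leg becomes either a $\ket{0}$ effect or a $\ket{1}$ effect on the SAT gadget. The $\ket{0}$ branch is handled immediately by \eqref{satembed_zerostate}, which kills or trivialises the clause; for the $\ket{1}$ branch I would use the H-box colour-change together with \eqref{satembed_copy} to push the effect through the clause and collapse the resulting gadget. Recombining the two branches back into a single Z-spider using \eqref{zspidersum} in reverse should then produce the right-hand side of the stated equivalence.

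Alternatively, if the statement has the shape of a one-step algebraic identity rather than a case split, I would attempt a purely rule-based derivation: apply the \hadamardrule{} colour-change to align spider colours with the SAT gadget, use spider fusion \spiderrule{} to merge the postselection effect into the clause's central spider, and then finish with \eqref{satembed_copy} (or its transpose) to obtain the target diagram. Either route ultimately relies on the two derived SAT-embedding rules doing the heavy lifting, so the remaining content is bookkeeping.

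The main obstacle will be scalar tracking: ZH-calculus identities in this paper are stated up to explicit scalar factors, and the SAT embedding typically introduces $\frac{1}{2}$-type normalisations from H-boxes. I would therefore carry the scalar through each rewriting step explicitly, collecting contributions from \eqref{satembed_zerostate} and \eqref{satembed_copy} and from any \hhrule{} applications, and check at the end that the accumulated prefactor matches the one shown in the statement. Everything else reduces to standard ZH-calculus manipulation of Z-spiders, H-boxes, and the SAT-embedded clause, which the preceding derived rules make mechanical.
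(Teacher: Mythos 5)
Your overall strategy---expand the Z-spider via \eqref{zspidersum}, split into $\ket{0}$ and $\ket{1}$ branches, dispatch the $\ket{0}$ branch with \eqref{satembed_zerostate}, and recombine---is a sound and genuinely more elementary route than the one the paper takes. The paper performs no basis-state expansion at all: its entire proof is to invoke Lemma~2.28 of \cite{backens_completeness_2021} (a known postselection/absorption property of H-boxes) together with \eqref{satembed_zerostate}. What your approach would buy is self-containedness; what it costs is redoing by hand exactly the computation that the cited lemma packages up.

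That said, as written your proposal has a gap at the one point where the real content lies: the $\ket{1}$ branch. You propose to close it with a colour change plus \eqref{satembed_copy}, but \eqref{satembed_copy} is a copy rule and is not the right tool for pushing a $\ket{1}$ effect through a zero-labelled H-box; what is needed there is precisely the absorption identity that the paper imports as Lemma~2.28 of \cite{backens_completeness_2021}, and you neither cite it nor derive it from the ZH axioms. In addition, the proposal hedges between two mutually exclusive strategies (case split versus a one-step fusion argument) and never commits to what the two sides of the equivalence actually are, so no branch of the argument is carried to completion. To turn this into a proof you would need to fix the diagram, handle the $\ket{1}$ case explicitly by deriving or citing the H-box absorption rule, and then verify the scalar, as you correctly anticipate in your final paragraph.
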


\begin{proof}
    This follows from Lemma 2.28 in \cite{backens_completeness_2021} and Equation~\eqref{satembed_zerostate}.
\end{proof}

\begin{lemma}
    \label{completeness_zh0tof}
    The following diagram equivalence holds:
    \begin{equation}
        \tikzfig{completeness-zh0tof}
    \end{equation}
    This is translated from a quantum circuit identity of Ng and Wang \cite{ng_completeness_2018} and is a generalization of the rules $\mathrm{HT}$ and $\mathrm{BW}$ from the $\Delta\mathrm{ZX}_\pi$-calculus \cite{vilmart_zx-calculus_2019}, a system for describing tensor networks that is closely related to ZH-calculus.
\end{lemma}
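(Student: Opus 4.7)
The plan is to prove this identity by appealing to the completeness of the ZH-calculus: it suffices to show that both sides of the diagrammatic equation denote the same underlying tensor, after which the completeness theorem of Backens, Perdrix, and Wang~\cite{backens_completeness_2021} guarantees that there exists a finite sequence of ZH rewrites between them.

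First, I would unfold each side into its tensor-network interpretation. Every Z-spider can be expanded as a sum over computational basis states using equation~\eqref{zspidersum}, and every H-box with label $a$ contributes a factor of $a$ whenever all of its incident wires carry the value $1$ and a factor of $1$ otherwise. Applied to both diagrams, this produces finite sums of products indexed by the boundary assignments and a small number of internal variables, which can be contracted in closed form.

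Next, rather than verifying the resulting tensor equality by brute force, I would use the hint recorded in the lemma statement: the identity is a direct translation of a quantum circuit identity of Ng and Wang~\cite{ng_completeness_2018}. I would therefore identify, on each side of the diagrammatic equation, the ZH subdiagrams that correspond to the Hadamard gates, controlled phases, and higher-arity H-boxes in their circuit identity, and invoke that identity to conclude the equality of tensors. As a sanity check, specializing the labels to match the $\mathrm{HT}$ and $\mathrm{BW}$ rules of $\Delta\mathrm{ZX}_\pi$~\cite{vilmart_zx-calculus_2019} should recover those rules, which gives two independent cross-checks.

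The main obstacle will be bookkeeping of scalars and the mismatch between an H-box on an internal wire and a Hadamard gate acting on a qubit: the two differ by a normalization factor, so before invoking the Ng--Wang identity I have to absorb these factors into explicit Z-spider scalars on both sides. Once the scalar accounting is aligned, the tensor-level equality is immediate from their result, and completeness of ZH-calculus then yields the diagrammatic derivation.
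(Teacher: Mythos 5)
Your proposal is essentially the paper's own proof: the paper verifies the lemma simply by ``concrete calculation of the matrices,'' i.e., by checking that both sides denote the same tensor, which is exactly the semantic verification you describe (with the appeal to ZH-completeness to lift the tensor equality to a diagrammatic derivation left implicit). Your detour through the Ng--Wang circuit identity rather than a direct matrix computation is only a cosmetic difference, and your observation that the H-box/Hadamard scalar normalization is the one place requiring care is correct.
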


\begin{proof}
    This can be verified by concrete calculation of the matrices.
\end{proof}

\begin{lemma}
    \label{completeness_zh0tofident}
    The following diagram equivalence holds:
    \begin{equation}
        \tikzfig{completeness-zh0tofident}
    \end{equation}
\end{lemma}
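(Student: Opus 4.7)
The plan is to derive Lemma~\ref{completeness_zh0tofident} as a direct corollary of Lemma~\ref{completeness_zh0tof}, specializing the general identity there by composing with a distinguished state (most naturally a $\ket{0}$ or $\ket{+}$ on one of the open wires). Because the name suggests an ``identity'' form of the preceding lemma, the right-hand side should collapse after the specialization into something that can be simplified by standard ZH-calculus moves, together with the two derived rules \eqref{satembed_zerostate} and \eqref{satembed_copy} recalled in Appendix~\ref{sec:zhrules}.

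Concretely, I would proceed in three stages. First, I would rewrite the left-hand side of the claimed equivalence so that Lemma~\ref{completeness_zh0tof} is directly applicable: this typically means using spider fusion \zxspider\ to combine adjacent Z-spiders of matching colour and pulling apart any H-boxes so the pattern matches the one in Lemma~\ref{completeness_zh0tof} exactly. Second, I would apply Lemma~\ref{completeness_zh0tof} once on the prepared subdiagram, yielding an expression that contains an H-box incident on a spider in the configuration handled by \eqref{satembed_zerostate} or \eqref{satembed_copy}. Third, I would clean up the remaining diagram using the copy rule \zxsc, the identity rule \zxid\ and the scalar rule \ScalarRule\ to reach the stated right-hand side.

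The main obstacle I expect is bookkeeping of scalar factors. Lemma~\ref{completeness_zh0tof} is a generalisation of the $\mathrm{HT}$ and $\mathrm{BW}$ rules and those rules introduce nontrivial scalars when H-boxes act on computational-basis states, so after applying it I have to carefully track the overall prefactor through the simplification with \eqref{satembed_zerostate} and \eqref{satembed_copy}. A secondary subtlety is that, depending on how the open wires are oriented in the diagram in the lemma statement, an initial colour change via \zxcolor\ might be needed before spider fusion is legal; if so, that step must be inserted before Stage~1 and its own scalar contribution accounted for.

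If an entirely diagrammatic route proves awkward, a fallback would be to verify the equivalence by a direct matrix calculation as was done for Lemma~\ref{completeness_zh0tof} itself: both sides describe small tensors (at most a few qubits wide given the context of these completeness-style lemmas), so their matrix entries can be computed and compared, and the resulting equality then lifted back to a diagrammatic identity by soundness of the ZH-calculus.
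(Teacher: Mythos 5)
Your fallback is exactly the paper's proof: Lemma~\ref{completeness_zh0tofident} is justified there by the single sentence that it ``can be verified by concrete calculation of the matrices,'' with no diagrammatic derivation attempted. So to the extent your proposal is correct, it is correct because of the fallback, and in that respect it matches the paper.

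Your primary three-stage route, however, is not a proof as written. It is built entirely on a guess about the content of the diagram (inferred from the label ``\emph{ident}''), and every concrete step is left conditional: you do not commit to which state is plugged in, which spiders must be fused, whether a colour change is needed, or what the resulting scalar prefactor is --- you only note that these would all have to be tracked. Since Lemma~\ref{completeness_zh0tof} is itself only established by matrix calculation in the paper, deriving Lemma~\ref{completeness_zh0tofident} from it diagrammatically would be a genuinely different (and arguably more informative) route, but to count as such you would need to exhibit the actual rewrite sequence with the scalars resolved; as it stands the plan could fail at the very first stage if the left-hand side does not in fact contain an instance of the pattern in Lemma~\ref{completeness_zh0tof}. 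If you want a self-contained argument, commit to the matrix computation: both sides are small tensors and the equality of their entries is the entire content of the lemma.
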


\begin{proof}
    This can be verified by concrete calculation of the matrices.
\end{proof}

\begin{lemma}
    \label{completeness_zh0tofgen}
    The following diagram equivalence holds for all $n \geq 0$:
    \begin{equation}
        \tikzfig{completeness-zh0tofgen}
    \end{equation}
\end{lemma}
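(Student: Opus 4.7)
The plan is to proceed by induction on $n$, using Lemma~\ref{completeness_zh0tofident} as the base case and Lemma~\ref{completeness_zh0tof} as the inductive engine. Since the two preceding lemmas are verified by direct matrix computation, they give us exactly one ``unit step'' of the transformation together with a terminating identity, so an inductive argument is the natural way to lift them to arbitrary arity.

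For the base case ($n = 0$, or whatever the smallest arity is for which the diagram on each side is well-defined), I would observe that the statement reduces directly to Lemma~\ref{completeness_zh0tofident}, possibly after trivial simplifications using spider fusion \zxspider{} or the identity rule \zxid{} to absorb dangling wires.

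For the inductive step, I would assume the equivalence for $n$ and try to prove it for $n+1$. The idea is to ``peel off'' one leg: using spider fusion together with the rules~\eqref{satembed_zerostate} and~\eqref{satembed_copy} recalled in Appendix~\ref{sec:zhrules}, I would rewrite the $(n+1)$-ary diagram so that it exposes an occurrence of the left-hand side of Lemma~\ref{completeness_zh0tof} acting on one of the legs, together with a sub-diagram of the same shape as the $n$-ary case attached to the remaining $n$ legs. Applying Lemma~\ref{completeness_zh0tof} once rewrites the exposed leg into the desired form on the right-hand side, and then the inductive hypothesis finishes the remaining $n$ legs. A small amount of bookkeeping using \zxspider{} and \zxhh{} to fuse spiders and cancel adjacent Hadamard boxes may be needed to match the target diagram exactly.

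The main obstacle I anticipate is the bookkeeping in the inductive step: identifying precisely where to split the $(n+1)$-leg diagram so that Lemma~\ref{completeness_zh0tof} applies cleanly while leaving a genuine $n$-leg instance for the inductive hypothesis, rather than something slightly off-shape that would require further ad~hoc rewrites. One clean way to organize this is to prove an intermediate ``single-leg'' rewrite as a corollary of Lemma~\ref{completeness_zh0tof} (saying that one leg of the general diagram can be converted into the target form at the cost of leaving the other $n$ legs in the recognizable inductive shape), and then iterate it $n+1$ times; this separates the local diagrammatic manipulation from the induction itself and keeps the argument short.
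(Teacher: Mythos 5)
Your proposal matches the paper's proof: it is exactly an induction on $n$, with a base case at $n=0$ and an inductive step for $n=k+1$ that peels off one leg using the preceding matrix-verified lemmas. No substantive difference in approach.
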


\begin{proof}
    We proceed by induction on $n$. For the case of $n = 0$:
    \begin{equation}
        \tikzfig{completeness-zh0tof0}
    \end{equation}
    For the case of $n = k + 1$, assuming the result holds for $k$:
    \begin{equation}
        \tikzfig{completeness-zh0tofkp1}
    \end{equation}
\end{proof}

\end{toappendix}

\maketitle

\begin{abstract}
We study the counting version of the Boolean satisfiability problem \sSAT using the ZH-calculus, a graphical language originally introduced to reason about quantum circuits. Using this, we generalize \sSAT to a weighted variant we call $\sSAT_\pm$, which is complete for the class GapP. We show there is an efficient linear-time reduction from \sSAT to $\stwoSAT_\pm$, unlike previous reductions from \sSAT to \stwoSAT which blow up the size of the formula by a polynomial factor. Our main conceptual contribution is that introducing weights to \sSAT allows for more efficient translations, and we use this to remove the dependence on clause width $k$ in this case.
We observe that DPLL-style algorithms for \stwoSAT can be adapted to $\stwoSAT_\pm$ directly and hence the best-known upper bounds for \stwoSAT apply. Applying an upper bound for \stwoSAT in terms of variables gives us upper bounds for \sSAT in terms of clauses and variables that are better than $O^*(2^n)$ for small clause densities of $\frac{m}{n} < 2.25$, and improve on previous average-case and worst-case bounds for $k \geq 6$ and $k \geq 4$, respectively. Applying a similar bound in terms of clauses produces a bound of $O^*(1.1740^L)$ in terms of the length of the formula. These are, to our knowledge, the first non-trivial upper bounds for \sSAT that is independent of clause size, and in terms of formula length, respectively. Based on a result of Kutzkov, we find an improved bound on \sthreeSAT for $1.2577 < \frac{m}{n} \leq \frac{7}{3}$. Finally, we use this technique to find an upper bound on the complexity of calculating amplitudes of quantum circuits in terms of the total number of gates. Our results demonstrate that graphical reasoning can lead to new algorithmic insights, even outside the domain of quantum computing that the calculus was intended for.
\end{abstract}



\section{Introduction}

A graphical calculus is a language consisting of diagrams that can be transformed according to specific graphical rewrite rules. Usually these diagrams correspond to some underlying mathematical object that would be hard to reason about directly---like a matrix, tensor, relation or some combinatorial object---and the rewrite rules preserve the semantics of this interpretation.
There are for instance graphical calculi for linear algebra~\cite{BONCHI2017144,zanasi2015thesis,Bonchi2019Graphical,Boisseau2022Graphical}, for studying concurrency~\cite{Bonchi2014Categorical,Bonchi2019Diagrammatic}, and for finite-state automata~\cite{Piedeleu2021stringdiagrammatic}.

Most relevant for this paper are the graphical calculi developed for studying quantum computing. The \emph{ZX-calculus}~\cite{CD1,coecke_interacting_2011} can represent arbitrary linear maps between any number of qubits, and has different versions of rewrite rules that are \emph{complete} (meaning the rules can prove any true equality) for various relevant fragments of quantum computing~\cite{Backens1,SimonCompleteness,HarnyAmarCompleteness,vilmarteulercompleteness}.
It has seen use in a variety of areas like optimizing quantum computations~\cite{cliffsimp,kissinger2019tcount,deBeaudrap2020Techniques,hanks2019effective,Backens2020extraction}, more effectively classically simulating quantum computations~\cite{kissinger2021simulating,kissinger2022classical}, and several others like~\cite{magicFactories,cerveromartin2022barren,townsend-teague2021classifying}; see~\cite{vandewetering2020zxcalculus} for a review.

There are a number of variations on the ZX-calculus that include different or additional generators~\cite{Wang2019Algebraic,vilmart_zx-calculus_2019,hadzihasanovic2015diagrammatic}. The one we will use is the \emph{ZH-calculus}~\cite{backens_zh_2019,backens_completeness_2021}.
The ZH-calculus has turned out to be useful in a variety of areas~\cite{pathsRenaud,d.p.east2021spinnetworks,vilmart2021quantum}, but in particular it has been shown to naturally encode Boolean satisfiability and counting problems~\cite{de_beaudrap_tensor_2021}. We will build on this representation to show that this perspective leads to better algorithms for formulae that have a low number of clauses.

The Boolean satisfiability problem (\SAT) is to determine whether a given Boolean formula has a satisfying assignment of variables, and is a canonical example of an \NP-complete problem. Worst-case upper bounds for solving \SAT instances can be phrased in terms of some relevant parameters for a Boolean formula $f:\B^n\to \B$ in conjunctive normal form: its number of variables $n$, the maximum clause size $k$, the number of clauses $m$, the clause density $\delta:=\frac{m}{n}$, the number of literals $L$, and the maximal number of clauses a variable participates in.
For instance, for fixed $k$, there are bounds $O^*(c_k^n)$ where $c_k<2$~\cite{dubois_counting_1991}. Yamamoto~\cite{yamamoto_improved_2005-1} showed that \SAT can be solved in $O^*(2^{0.3033m})$ time (independent of $k$) . This hence implies a better than $O^*(2^n)$ runtime for clause densities $\delta < 3.297$, regardless of $k$. While for large $k$ and large $\delta$ all known bounds converge to $O^*(2^n)$, for small $k$ and arbitrary $\delta$ or vice-versa, better bounds are possible.

In this paper we will study the problem \sSAT, which asks \emph{how many} solutions a Boolean formula has. Hence, this is not a decision problem, but a counting problem. It is complete for the complexity class \sP, which is the `counting version' of \NP. \sSAT is believed to be a significantly harder problem than \SAT. For instance, the entire polynomial hierarchy is contained in $\textbf{P}^{\sSAT}$ \cite{toda_pp_1991}.
As it is \sP-complete, it has applications in a variety of areas. For instance, tensor-network contraction (when suitably formalized) is in \sP~\cite{damm_complexity_2002}. \sSAT also has applications in the field of artificial intelligence, where it is usually referred to as \emph{model counting} \cite[Chapter 20]{biere_handbook_2009}.
Note that while for \SAT the problem only becomes hard for $k\geq 3$, for \sSAT, the problem is already hard for $k=2$, as $\stwoSAT$ is \sP-complete \cite{valiant_complexity_1979}. In fact, previously in \cite{laakkonen_2023}, we used the ZH-calculus to provide a proof of this and other reductions between related counting problems. In this work, we make use of similar techniques in a different direction to give improved upper bounds for \sSAT.

We can phrase the known upper bounds to \sSAT in terms of $n$, $k$, $m$ and $\delta$, although for \sSAT much less is known. Similarly to \SAT, good bounds are known for small $k$ --- $O^*(1.2377^n)$ for $k = 2$ \cite{wahlstrom_tighter_2008}, and $O^*(1.6423^n)$ for $k = 3$ \cite{kutzkov_new_2007}. There are also bounds known for arbitrary (but fixed) $k$ in the worst-case setting~\cite{dubois_counting_1991} and the average-case setting~\cite{williams_computing_2004}. Unlike \SAT, bounds in terms of $m$ are not known for \sSAT independent of $k$, but only for $k = 2$, $O^*(1.1740^{m})$~\cite{wang_worst_2013}, and $k = 3$, $O^*(1.4142^{m})$~\cite{zhou_new_2010}. 


In this paper we establish, to the best of our knowledge, the first algorithm for $\sSAT$ that is better than brute-force for low clause density, independent of the clause size $k$. Specifically, we prove the following theorem:
\begin{theorem}
    \label{ssatmbound}
    Given a CNF formula $\phi:\B^n\to \B$, we can count the number of satisfying assignments $$\#(\phi) = \#\{\vec x\in \B^n\,|\, \phi(\vec x) = 1\}$$ in time $O^*(1.2377^{n + m_{\geq 3}})$, where $m_{\geq 3}$ is the number of clauses of width at least three.
    In particular, for clause density $\delta < 2.2503$, this gives a better than $O^*(2^n)$ bound, independent of maximal clause size $k$.
\end{theorem}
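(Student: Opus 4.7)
The plan is to reduce the $\sSAT$ instance $\phi$ to a $\stwoSAT_\pm$ instance of controlled size and then invoke Wahlstr\"om's $O^*(1.2377^n)$ bound through the $\mathrm{CDP}_\pm$ algorithm described earlier in the paper.

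First, I would apply the reduction of Appendix~\ref{app:nodiagrams}, but only to clauses of width at least three. For each such clause $C_i = c_{i1} \lor \cdots \lor c_{ik_i}$ with $k_i \geq 3$, introduce a fresh auxiliary variable $y_i$, replace $C_i$ by the conjunction of width-$2$ clauses $\bigwedge_{j=1}^{k_i}(\lnot c_{ij} \lor \lnot y_i)$, and attach the $\sSAT_\pm$ sign $(-1)^{y_i}$; clauses of width at most two are already $\stwoSAT$-clauses and are left untouched. The per-clause calculation in Appendix~\ref{app:nodiagrams} shows that summing the signed weight over $y_i$ gives exactly the indicator of $C_i$, and since the $y_i$ are mutually independent the reduction composes clause-by-clause. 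The resulting $\stwoSAT_\pm$ instance therefore has $n + m_{\geq 3}$ variables and signed model count equal to $\#(\phi)$.

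Next, I would feed this instance to $\mathrm{CDP}_\pm$. As argued at the end of Appendix~\ref{app:nodiagrams}, the branching structure of $\mathrm{CDP}_\pm$ coincides with that of ordinary DPLL for $\stwoSAT$ up to bookkeeping of scalar weights, so any bound derived from branching analysis for $\stwoSAT$ carries over unchanged. Plugging in Wahlstr\"om's bound of $O^*(1.2377^n)$ for $\stwoSAT$~\cite{wahlstrom_tighter_2008} yields a running time of $O^*(1.2377^{n + m_{\geq 3}})$.

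Finally, I would verify the density threshold by elementary arithmetic: $1.2377^{n+m_{\geq 3}} < 2^n$ whenever $m_{\geq 3}/n < \ln 2 / \ln 1.2377 - 1 \approx 2.2503$, and since $m_{\geq 3} \leq m$, the condition $\delta = m/n < 2.2503$ suffices regardless of the maximum clause size~$k$. The only non-routine ingredient is checking that the reduction may legitimately be applied selectively --- skipping clauses of width at most two while still preserving the $\mathrm{CDP}_\pm$ analysis --- but since such clauses are already native to $\stwoSAT_\pm$ and the per-clause reduction is local in its auxiliary variable, this is immediate from the appendix calculation, leaving no substantive obstacle.
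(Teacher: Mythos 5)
Your proposal is correct and follows essentially the same route as the paper: the paper's proof applies $\mathrm{CDP}_\pm^{\to 2}$, i.e.\ the reduction of Lemma~\ref{dpll_neg_kclause} to each clause of width at least three (introducing one $\pi$-phased variable per such clause, hence $n+m_{\geq 3}$ variables total) and then invokes Wahlstr\"om's $O^*(1.2377^{\mathrm{variables}})$ bound via $\mathrm{CDP}_\pm$. You phrase the reduction in its non-diagrammatic form from Appendix~\ref{app:nodiagrams}, but that is the same reduction the paper uses, and your density arithmetic matches the paper's threshold of $\delta<2.2503$.
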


The worst-case bound for our algorithm improves the best-known bound for a variety of different parameters. We summarize this in Tables~\ref{boundtable} and~\ref{density_comparison}. Note that these tables also contain our results based on a more fine-grained analysis of \#\textbf{3SAT} that is presented in Section~\ref{low3sat}, as well as a bound on $\sSAT$ in terms of literals $L$ that is presented in Section~\ref{sec:literal-bound}. Furthermore, assuming the strong exponential time hypothesis, our results indicate that the `hardest' density of \sSAT must be some $\delta > 2.2503$. As far as we aware this is the first known bound on where the hardest clause density of \sSAT lies.

\begin{table}
    \centering

    \begin{tabular}{lllll}
        \toprule
        Problem & Previous Best & New Bound & Relevant Region & Algorithm\\\midrule
        $\skSAT$ for $k > 3$ & $O^*(c_k^n)$, $c_k \to 2$  \cite{dubois_counting_1991} & $O^*(1.2377^{n + m})$ & $\delta < 2.2503$  (as $k \to \infty$) & \cite{wahlstrom_tighter_2008} \\\midrule
        $\skSAT$  for $k > 3$ & $O^*(c_k^n)$, $c_k \to 2$  \cite{dubois_counting_1991} & $O^*(1.1740^L)$ & $\frac{L}{n} < 4.3209$  (as $k \to \infty$) & \cite{wang_worst_2013} \\\midrule
        $\#\textbf{3SAT}$ & $O^*(1.6423^n)$ \cite{kutzkov_new_2007} & $O^*(1.6350^n)$ & $1.2577 < \delta \leq \frac{7}{3}$ & \cite{wahlstrom_tighter_2008} \cite{kutzkov_new_2007} \\ \bottomrule
    \end{tabular}

    \caption{The different bounds obtained in this paper, along with the corresponding best previous bounds, and the underlying algorithm on which they are based. The given relevant regions are the intervals of formula parameters where our bounds are valid and improve on the previous bound. \label{boundtable}}
\end{table}

\setlength{\tabcolsep}{1.5mm}
\begin{table}[tp]
    \centering
    \begin{tabular}{r@{\hskip 3mm}|c|cccccccc}
        \toprule
        Improvement on & $k$ & 2 & 3 & 4 & 5 & 6 & 7 & 8 & 9 \\
        \midrule
        Average-Case & $\delta$ & -- & -- & -- & -- & -- & $<0.968$ & $<1.106$ & $<1.207$ \\
        \midrule
        Worst-Case & $\delta$ & -- & $<2.333^*$ & $<2.077$ & $<2.170$ & $<2.212$ & $<2.231$ & $<2.241$ & $<2.246$ \\
        \bottomrule
    \end{tabular}
    \caption{The maximum densities $\delta$ at which our upper bound improves on other bounds, as dependent on $k$. We include average \cite{williams_computing_2004} and worst-case bounds \cite{dubois_counting_1991}, both of which converge to $\delta = 2.2503$ as $k$ increases. Entries marked with `--' are where our bound is always worse. The entry marked `$*$' uses the alternate bound presented in Section \ref{low3sat} and only improves on the previous best when also $\delta > 1.2577$. \label{density_comparison}}
\end{table}

While the bound of Theorem~\ref{ssatmbound} is only effective at low densities, this is sufficient for many real-world use cases -- in particular, for the unweighted \sSAT instances from the Model Counting Competition 2020 \cite{fichte_model_2020}, we found that we improve on the previous best worst-case bound for 88\% of instances, and the average-case bound for 45\% of instances, assuming the constant factors of both algorithms are equal (we take $k$ to be the 90th percentile of clause widths to avoid few large clauses biasing the result in our favor). However, it is important to note that practical \sSAT solvers perform much better than predicted by any of these upper bounds.


Our results are based on two observations.
The first observation is that the standard CDP (Counting Davis-Putnam) algorithm \cite{birnbaum_good_1999} for solving \sSAT can actually solve a more general problem that we dub $\sSAT_{\pm}$. In this problem, variables $x_i$ are labeled by a $\phi_{i} = \pm 1$ phase that determines whether a solution to $f$ should be added or subtracted to the total, as determined by $\prod_{\vec x}\phi_{i}^{x_i}$.
The second observation is that we can translate an arbitrary \sSAT instance into a $\stwoSAT_{\pm}$ instance. This removes the dependence on maximal clause size $k$ from our problem, and means we can use the known upper bounds to \stwoSAT that apply directly to the problem $\stwoSAT_{\pm}$ as well.

We found this last observation by writing a \sSAT instance as a ZH-diagram as 
described by de Beaudrap \emph{et al.}~\cite{de_beaudrap_tensor_2021}. 
We extend their methods by relaxing the conditions on the types of diagrams we consider, which shows that ZH-diagrams also naturally represent $\sSAT_\pm$ instances. The translation from a \sSAT instance into a $\stwoSAT_\pm$ instance then follows from a known rewrite rule of the related ZX$\Delta$-calculus~\cite{vilmart_zx-calculus_2019}.
We also find that $\sSAT_\pm$ is in fact complete for the complexity class \textbf{GapP}~\cite{fenner_gap-definable_1994}, which is \sP closed under negation. 

In Section~\ref{sec:prelim} we recall the definition of the ZH-calculus, how to encode \sSAT instances as ZH-diagrams, and how the CDP algorithm for solving \sSAT works.
Then in Section~\ref{sec:results} we present our main results: we show how to interpret CDP graphically inside the ZH-calculus, and we find a graphical reduction from  \sSAT to $\stwoSAT_{\pm}$. We end with a complexity analysis of combining this reduction with the CDP algorithm, modified to work for $\stwoSAT_{\pm}$.
In Section~\ref{sec:variations} we study some variations on our algorithm: Section~\ref{sec:literal-bound} presents a new bound of $O^*(1.1740^{L})$ for \sSAT that is in terms of number of literals; Section~\ref{low3sat} gives a modified algorithm for \textbf{\#3SAT} that is better for certain densities; and Section~\ref{sec:SAT-phases} presents the problem of \sSAT where variables are labeled by arbitrary complex numbers, which we conjecture might be helpful for future improvements. In particular, in Appendix~\ref{sec:circuit-sim} we apply this technique to calculating amplitudes of quantum circuits, and show an upper bound in terms of the number of gates.
We end with some concluding remarks in Section~\ref{sec:conclusion}.

\section{Preliminaries}\label{sec:prelim}

We say that a Boolean formula $\phi : \mathbb{B}^n \to \mathbb{B}$ is in conjunctive normal form if we have 
\begin{equation}
    \phi(x_1, \dots, x_n) = \bigwedge_{i = 1}^{m} (c_{i1} \lor c_{i2} \lor \cdots \lor c_{ik_i})
\end{equation}
where each $c_{ij}$ is $x_l$ or $\lnot x_l$ for some $l$. We say that $\phi$ has $n$ variables, $m$ clauses, maximum clause width $k = \max_i \{k_i\}$, density $\delta = \frac{m}{n}$ and number of literals $L = \sum_i k_i$. 

\subsection{The ZH-calculus}\label{sec:ZH}

We use the ZH-calculus, a graphical language designed for reasoning about quantum computations~\cite{backens_zh_2019}. Here we only recall the definition of the generators; see \cite[Section 8]{vandewetering2020zxcalculus} for a more detailed overview. The calculus is defined by rewrites on ZH-diagram, which are composed of two generators, called the Z-spider and the H-box. These are given by
\begin{equation}
    \tikzfig{qpl-zh-generators}
\end{equation}
along with their interpretation as tensors. Each H-box is labeled by a constant $a \in \mathbb{C}$, with unlabeled H-boxes corresponding to $a = -1$. Diagrams composed from these can be interpreted as tensor networks: see \cite[Section 4.1]{orus_practical_2014} for an introduction. In particular, parallel composition of generators corresponds to composing the tensors via tensor product, and connecting wires between generators corresponds to contracting the shared index. The tensors are symmetric over their indices, which implies that \emph{only connectivity matters} -- diagrams with the same topology represent the same tensor network. 

We also have the following derived generators, defined as
\begin{equation}
    \label{eq:zspiderphase}
    \tikzfig{qpl-zh-generators-derived}
\end{equation}
where the first is an extension of the Z-spider with $\alpha \in \mathbb{C}$ a phase, and the second is the X-spider. In this work, we consider any ZH-diagram to represent its underlying tensor network - hence, if there are $n$ open wires, this represents a tensor with $n$ indices. In particular, if there are no open wires we call this a scalar diagram, since this represents a scalar. 

The ZH-calculus is equipped with a set of rewrite rules, which are shown in Appendix \ref{sec:zhrules}. They are sound with respect to the tensor representation of ZH-diagrams, and also complete: for any pair of diagrams with identical tensor representations, there is a proof that they are equal using these rules.

\subsection{\sSAT instances as ZH-diagrams}
\label{sec:embedding}

    In previous work, de Beaudrap \emph{et al.} \cite{de_beaudrap_tensor_2021} gave a translation from \sSAT instances into ZH-diagrams, which we adopt here. In particular, a CNF Boolean formula $\phi$ is mapped into a ZH-diagram by translating clauses to zero-labeled H-boxes, variables to Z-spiders and negation to X-spiders:
    \begin{equation}
        \tikzfig{qpl-zh-sat-defs}
    \end{equation}
    These are combined by connecting the variables to the clauses that they occur in. For unnegated literals, they are connected with a wire. For negated literals, they are connected via an X-spider. This yields the following
    \begin{equation}
        \tikzfig{qpl-zh-sat-structure}
    \end{equation}
    where $G$ defines the connections. The scalar value of this diagram is exactly equal to the number of satisfying assignments to $\phi$. This can be simplified by canceling all adjacent X-spiders, leaving an X-spider connected between a variable and a clause if and only if that variable is contained in that clause \emph{unnegated}. For example, for $\phi(x_1, x_2, x_3, x_4) = (\lnot x_1 \lor \lnot x_2 \lor \lnot x_3) \land (x_2 \lor x_3) \land (\lnot x_1 \lor x_3) \land (x_3 \lor x_4)$, we have:
    \begin{equation}
        \tikzfig{qpl-zh-sat-example-2}
    \end{equation}
    Therefore, a diagram derived this way has $m$ H-boxes, $n$ Z-spiders, at most $L$ X-spiders, and the maximum clause size $k$ corresponds to the maximum degree of any H-box. Note that a complete graphical calculus for \SAT was introduced recently in~\cite{gu2022complete}. However, the semantics of their diagrams directly correspond to a matrix of True or False values, and hence cannot represent \sSAT instances directly. In this sense it is similar to the modified ZH-diagrams of~\cite{de_beaudrap_tensor_2021} where they set $2=1$ and consider diagrams over the Boolean semi-ring.

\subsection{The CDP algorithm for \#SAT}

The Counting David-Putnam (CDP) algorithm is an algorithm for solving $\sSAT$ that was first introduced in 1999 \cite{birnbaum_good_1999}, as an extension of the DPLL algorithm \cite{davis_machine_1962} for $\SAT$ solving. It is effectively an optimized depth-first search over all possible assignments of variables in a Boolean formula. The algorithm is based on the following two rules:
\begin{enumerate}
    \item \textit{Unit Propagation:} The following rewrite holds for any clauses $A_i$ and $B_i$ not containing some literal $x$:
    \begin{equation}
        \begin{aligned}
            x \land (A_1 \lor x) \land \cdots \land (A_n \lor x) \land (B_1 \lor \lnot x) \land \cdots \land (B_m \lor \lnot x) = B_1 \land \cdots \land B_m
        \end{aligned}
    \end{equation}
    \item \textit{Variable Branching:} For any variable $x$ appearing in a formula $f$, the number of satisfying assignments of $f$ is the sum of the numbers of satisfying assignments for
    \begin{equation}
        f_1 = f \land x \qquad \text{and} \qquad f_2 = f \land \lnot x
    \end{equation}
    since in each such assignment, either $x$ or $\lnot x$.
\end{enumerate}

The CDP algorithm, given in Algorithm~\ref{cdp_algo}, applies these two rules recursively until either a contradiction occurs and so the formula is unsatisfiable, or the formula has no clauses remaining, in which case the number of satisfying assignments is $2^n$. Clearly, at each recursive step, the formulas to be considered have fewer variables than at the previous steps, so this procedure terminates. 

This works better in practice than naively checking every assignment because unit propagation can eliminate many variables, thus removing whole branches from the computation tree. Additionally, the substitution of the assignment into the formula is done incrementally via unit propagation, rather than repeated for every assignment.

\begin{algorithm}[t]
    \DontPrintSemicolon
    \caption{The CDP \cite{birnbaum_good_1999} algorithm for solving $\sSAT$. \label{cdp_algo}}
    \KwIn{A CNF formula $f$ with $n$ variables and $m$ clauses.}
    \KwOut{The value of $\#\{ \vec{x} \in \{0, 1\}^n \mid f(\vec{x}) = 1 \}$.}
    \If{$f$ contains the clauses $x$ and $\lnot x$ for some variable $x$}{
        \Return{$0$}\;
    }
    \If{$f$ has no clauses remaining}{
        \Return{$2^n$}\;
    }
    Apply unit propagation to $f$ until it is no longer possible.\;
    Pick a variable $x$ that occurs in $f$ and generate $f_1 = f \land x$ and $f_2 = f \land \lnot x$.\;
    \Return{$\mathrm{CDP}(f_1) + \mathrm{CDP}(f_2)$}\;
\end{algorithm}

\section{Results}\label{sec:results}


\subsection{Interpreting CDP diagrammatically}

To interpret CDP diagrammatically we will first see how the two rules apply to the ZH-diagrams for $\sSAT$ instances detailed in Section \ref{sec:embedding}. The proofs of these lemmas and all the following results are postponed to Appendix \ref{sec:proofs}.

\begin{lemmarep}
    \label{dpll_unitprop_overview}
    The following diagrammatic equivalent to the unit propagation rule holds (without loss of generality, we assume the literal to be propagated is not negated):
    \begin{equation}
        \tikzfig{dpll-unitprop-overview}
    \end{equation}
    It can be read as follows - on the left-hand side, the zero H-box with one leg is the clause with a single non-negated literal $x$, the H-boxes on the left represent the clauses $B_i \lor \lnot x$ while the H-boxes on the right represent the clauses $A_i \lor x$. On the right-hand side, we see that the clauses $B_i$ remain, while the clauses $A_i \lor x$ have been removed entirely. Because the variable $x$ is now no longer mentioned, the Z-spider representing it is also removed.
\end{lemmarep}
\begin{proof}
    First, note that
    \begin{equation}
        \label{dpll_statepush}
        \tikzfig{dpll-unitprop-proof2}
    \end{equation}
    and so we have that
    \begin{equation}
        \tikzfig{dpll-unitprop-proof}
    \end{equation}
    which completes the proof.
\end{proof}

\begin{lemmarep}
    \label{dpll_branch_statement}
    The following diagrammatic equivalent to the variable branching rule holds:
    \begin{equation*}
        \tikzfig{dpll-branch-statement}
    \end{equation*}
    On the left-hand side, we have a variable connected to arbitrary clauses, whereas on the right-hand side we have two terms, each with a clause of one literal introduced onto the variable.
\end{lemmarep}
\begin{proof}
    This follows from writing the Z-spider as a sum
    \begin{equation}
        \label{dpll_ksat_statepush}
        \tikzfig{dpll-branch-proof1}
    \end{equation}
    and applying the $SF_Z$ rule to the central spider:
    \begin{equation}
        \tikzfig{dpll-branch-proof2}
    \end{equation}
\end{proof}

Modifications of the CDP algorithm are used extensively in practice, with some of the best solvers like sharpSAT \cite{thurley_sharpsat_2006} and Cachet \cite{sang_combining_2004} making use of this technique. All the best-known theoretical upper bounds on runtime are based on careful analysis of this algorithm. Note that we left the choice of variable to branch on unspecified - choosing this wisely is crucial to obtaining good runtimes, as we will see in the next section.

\subsection{Reduction from \sSAT to $\stwoSAT_\pm$}
\label{subsec:reduction}

Valiant \cite{valiant_complexity_1979} showed that $\stwoSAT$ is \sP-complete, which implies that there is a polynomial-time Turing reduction from $\sSAT$ to $\stwoSAT$. Since good upper bounds are known for $\stwoSAT$, reducing $\sSAT$ to $\stwoSAT$ is one strategy to obtain better than brute-force bounds that are independent of $k$. However, the polynomial-time reduction guaranteed by \sP-completeness maps instances with $n$ variables and $m$ clauses to instances with $O(nm)$ variables, which destroys any advantage we could have gained from faster $\stwoSAT$ algorithms. Instead we present the following linear-time reduction to a weighted variant of \sSAT:

\begin{lemmarep}
    \label{dpll_neg_kclause}
    The following diagrammatic equivalence holds:
    \begin{equation}
        \tikzfig{dpll-neg-kclause}
    \end{equation}
    This directly generalizes the $\mathrm{BW}$ axiom of the $\Delta\mathrm{ZX}$-calculus \cite{vilmart_zx-calculus_2019}.
\end{lemmarep}
\begin{proof}
    We can see the following
        \ctikzfig{dpll-neg-kclause-proof-1}
        \ctikzfig{dpll-neg-kclause-proof-2}
    which completes the proof.
\end{proof}

This lemma allows us to remove clauses with degree greater than two, at the cost of introducing extra Z-spiders. If we applied this to every clause in a $\sSAT$ diagram with $n$ variables and $m$ clauses, we would have a diagram with $n + m$ spiders and furthermore, with the exception of $\pi$-phases on $m$ of these spiders, this diagram would represent a $\stwoSAT$ instance. Therefore, we will relax our definition of $\sSAT$ to permit $\pi$-phases appearing on the Z-spiders (i.e.~on the variables). Note that we have the following
\begin{equation}
    \tikzfig{dpll-pm-signflip}
\end{equation}
which implies that these diagrams are the same as $\sSAT$ instances, but the sign is flipped whenever a variable corresponding to a Z-spider with a $\pi$-phase is assigned to be true. The overall sign of the diagram for a particular assignment of variables is given by the parity of the assignment of such variables. We can extend the $\sSAT$ problem as follows to handle this natively.

\begin{definition}
    The problem $\sSAT_\pm$ is defined as follows. Given a CNF formula $f(x_1, \dots, x_n)$ with $n$ variables and $m$ clauses, and a set $N \subseteq \{1, \dots, n\}$, compute the quantity
    \begin{equation}
        \sSAT_\pm(f, N) = \sum_{\substack{\vec{x} \in \mathbb{B}^n \\ \text{even }N\text{-parity}}}f(\vec{x}) ~~- \sum_{\substack{\vec{x} \in \mathbb{B}^n \\ \text{odd }N\text{-parity}}}f(\vec{x})
    \end{equation}
    where a vector $\vec{x} \in \mathbb{B}^n$ has even or odd $N$-parity if $\bigoplus_{i \in N} x_i = 0$ or $1$, respectively.
\end{definition}

\begin{theoremrep}
    $\sSAT_\pm$ is GapP-complete
\end{theoremrep}
\begin{proof}
    Suppose there is a problem $M$ in \textbf{GapP}, then the aim is to compute $\mathrm{gap}_M = \#M - \#\overline{M}$ where $\#M$ ($\#\overline{M}$) is the number of accepting (rejecting) paths of a non-deterministic Turing machine $M$. Clearly $\#M$ and $\#\overline{M}$ are both in \#P. Because the Cook-Levin reduction from NP to \textbf{SAT} is parsimonious, $\sSAT$ is \#P-complete, and there exist CNF formulae $f_M$ and $f_{\overline{M}}$ such that $\sSAT(f_M) = \#M$ and $\sSAT(f_{\overline{M}}) = \#\overline{M}$. Define the following formula:
    \begin{align*}
        f_{M - \overline{M}} &= \left(\bigwedge_{i = 1}^{m_M} C_M^i \lor z\right) \land \left(\bigwedge_{j = 1}^{n_M} \lnot x_M^j \lor \lnot z\right) \land \left(\bigwedge_{i = 1}^{m_{\overline{M}}} C_{\overline{M}}^i \lor \lnot z\right) \land \left(\bigwedge_{j = 1}^{n_{\overline{M}}} \lnot x_{\overline{M}}^j \lor z\right) \\ 
    \end{align*}
    where $C_M^i$ ($C_{\overline{M}}^i$) and $x_M^j$ ($x_{\overline{M}}^j$) are the clauses and variables of $f_M$ ($f_{\overline{M}}$) respectively, and $z$ is a fresh variable. When $z$ is false, $f_{M - \overline{M}}$ reduces to
    \begin{equation*}
        f_{M - \overline{M}} \land \lnot z = \left(\bigwedge_{i = 1}^{m_M} C_M^i\right) \land \left(\bigwedge_{j = 1}^{n_{\overline{M}}} \lnot x_{\overline{M}}^j\right) = f_M \land \left(\bigwedge_{j = 1}^{n_{\overline{M}}} \lnot x_{\overline{M}}^j\right)
    \end{equation*}
    which has exactly $\sSAT(f_M) = \#M$ satisfying solutions, and likewise with $z$ true, we have
    \begin{equation*}
        f_{M - \overline{M}} \land z = f_{\overline{M}} \land \left(\bigwedge_{j = 1}^{n_{M}} \lnot x_{M}^j\right)
    \end{equation*}
    which has exactly $\sSAT(f_{\overline{M}}) = \#\overline{M}$ satisfying solutions. Finally let $N = \{z\}$, then an assignment of $f_{M - \overline{M}}$ has odd or even N-parity exactly when $z$ is true or false respectively. Therefore,
    \begin{align*}
        \sSAT_\pm(f_{M - \overline{M}}, N) &= 
        \sum_{\Substack{\vec{x} \in \mathbb{B}^n \\ \text{even }N\text{-parity}}}f_{M - \overline{M}}(\vec{x}) - \sum_{\Substack{\vec{x} \in \mathbb{B}^n \\ \text{odd }N\text{-parity}}}f_{M - \overline{M}}(\vec{x}) = \sSAT(f_{M - \overline{M}} \land \lnot z) - \sSAT(f_{M - \overline{M}} \land z) \\
        &= \sSAT(f_M) - \sSAT(f_{\overline{M}}) = \#M - \#\overline{M} = \mathrm{gap}_M
    \end{align*}
    so there is a polynomial-time counting reduction from \textbf{GapP} to $\sSAT_\pm$, and it must be \textbf{GapP}-hard. Furthermore, let $f$, $N$ be an instance of $\sSAT_\pm$ with variables $x_i$, then clearly we have that
    \begin{align*}
        \sSAT_\pm(f, N) &= 
        \sum_{\Substack{\vec{x} \in \mathbb{B}^n \\ \text{even }N\text{-parity}}}f(\vec{x}) - \sum_{\Substack{\vec{x} \in \mathbb{B}^n \\ \text{odd }N\text{-parity}}}f(\vec{x}) = \sSAT\left(f \land \lnot \left(\bigoplus_{i \in N} x_i\right)\right) - \sSAT\left(f \land \left(\bigoplus_{i \in N} x_i\right)\right)
    \end{align*}
    since $g = 1 \iff g$ and $g = 0 \iff \lnot g$, so $\lnot (\bigoplus_{i \in N} x_i)$ is the same as asserting even N-parity (and likewise odd N-parity). But \textbf{GapP} is the closure of \#P under subtraction, so $\sSAT_\pm(f, N)$ is in \textbf{GapP}. Thus it follows that $\sSAT_\pm$ is \textbf{GapP}-complete.
\end{proof}

Since $\sSAT_\pm$ is in GapP, which is strictly harder than \#P (for instance, GapP is the closure of \#P under subtraction \cite{fenner_gap-definable_1994}), it may seem that an upper bound for this problem is guaranteed to be worse. However, this is not the case, as the following diagrammatic arguments show that the DPLL algorithm can be easily adapted to handle the sign change as Algorithm \ref{dpllpm}.

\begin{lemmarep}
    \label{dpll-neg-branch}
    The following diagrammatic equivalent to the variable branching rule holds for variables with $\pi$-phases:
    \begin{equation}
        \tikzfig{dpll-neg-branch}
    \end{equation}
\end{lemmarep}
\begin{proof}
    This follows from the sum
    \begin{equation}
        \tikzfig{dpll-neg-branch-proof}
    \end{equation}
    together with the proof of Lemma \ref{dpll_branch_statement}.
\end{proof}

\begin{lemmarep}
    \label{dpll-neg-unitprop}
    The following diagrammatic equivalent to the unit propagation rule holds for variables with $\pi$-phases:
    \begin{equation}
        \tikzfig{dpll-neg-unitprop}
    \end{equation}
\end{lemmarep}
\begin{proof}
    This follows from the identities
    \begin{equation}
        \tikzfig{dpll-neg-unitprop-proof}
    \end{equation}
    together with the proof of Lemma \ref{dpll_unitprop_overview}.
\end{proof}

\begin{algorithm}[t]
    \DontPrintSemicolon
    \caption{The $\mathrm{CDP}_\pm$ algorithm solving $\sSAT_\pm$. \label{dpllpm}}
    \KwIn{A CNF formula $f$ with $n$ variables and $m$ clauses, and a set of variables $N \subseteq \{1, \dots, n\}$.}
    \KwOut{The value of $\sSAT_\pm(f, N)$.}

    \uIf{$f$ contains an empty clause}{
        \Return{$0$}
    }\uElseIf{$f$ has no clauses}{
        \Return{$2^n$}
    }\Else{
        Pick $i \in \{1, \dots, n\}$ according to some strategy.\;
        $f_1 \gets \operatorname{Unit-Propagate}(f \land \lnot x_i)$\;
        $f_2 \gets \operatorname{Unit-Propagate}(f \land x_i)$\;
        \eIf{$i \in N$}{
            \Return{$\mathrm{CDP}_\pm(f_1, N) - \mathrm{CDP}_\pm(f_2, N)$}
        }{
            \Return{$\mathrm{CDP}_\pm(f_1, N) + \mathrm{CDP}_\pm(f_2, N)$}
        }
    }
\end{algorithm}

With a smart choice of the variables to branch on, the worst-case runtime of Algorithm \ref{dpllpm} can be bounded in exactly the same way as the regular $\mathrm{CDP}$ algorithm and its variants. This is because the bounds we consider here (e.g \cite{kutzkov_new_2007, furer_algorithms_2007, wahlstrom_tighter_2008, wang_worst_2013}) are obtained from two principles: 
\begin{itemize}
    \item Exploiting the fact that the conjunction of unrelated \sSAT instances combine multiplicatively. Diagrammatically, this corresponds to the parallel composition of scalar diagrams being defined as scalar multiplication. This remains true for the case when phases are present on Z-spiders.
    \item By analyzing which classes of sub-formulas can occur in any instance, and how they are affected by the unit propagation and branching rules. Diagrammatically, this corresponds to a case analysis on the possible subgraphs of the diagram, ignoring scalar factors. Since unit propagation and branching are unaffected (except for scalar factors) by the presence of phases on the Z-spiders (see Lemmas \ref{dpll-neg-branch} and \ref{dpll-neg-unitprop}), this also applies directly in this case.
\end{itemize}

In particular, these bounds do not depend on the specific scalar factors of each diagram, or the way that separate diagrams generated by branching are recombined. This means that the $O^*(1.2377^{\mathrm{variables}})$ bound of Wahlstr\"om \cite{wahlstrom_tighter_2008} can be adapted directly. By applying Lemma \ref{dpll_neg_kclause} to any $\sSAT$ diagram and then applying $\mathrm{CDP}_\pm$ to the resulting diagram directly, we can evaluate $\sSAT$ instances in time $O^*(1.2377^{n + m})$ $= O^*(2^{0.3068n + 0.3068m})$, which is certainly better than the bound given by decomposing into a sum of diagrams. We will refer to this method, given in Algorithm \ref{cdp2pm}, as $\mathrm{CDP}_\pm^{\to 2}$.

\begin{algorithm}[t]
    \DontPrintSemicolon
    \caption{The $\mathrm{CDP}^{\to 2}_\pm$ algorithm for $\sSAT$. \label{cdp2pm}}
    \KwIn{A CNF formula $f$ with $n$ variables and $m$ clauses.}
    \KwOut{The value of $\sSAT(f)$.}

    Generate $f'$ by applying Lemma \ref{dpll_neg_kclause} to every clause in $f$ of width at least three.\;
    Set $N$ to be all the variables labeled with $\pi$-phases.\;
    \Return{$\mathrm{CDP}_\pm(f', N)$}\;
\end{algorithm}

\begin{theorem}[Restatement of Theorem \ref{ssatmbound}]
    Given a CNF formula $f:\B^n\to \B$, we can count the number of satisfying assignments $\#\{\vec x\in \B^n\,|\, f(\vec x) = 1\}$ in time $O^*(1.2377^{n + m_{\geq 3}})$, where $m_{\geq 3}$ is the number of clauses of width at least three.
\end{theorem}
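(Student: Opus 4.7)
The plan is to apply Lemma~\ref{dpll_neg_kclause} to reduce the \sSAT instance to a $\stwoSAT_\pm$ instance and then invoke Wahlstr\"om's $O^*(1.2377^n)$ bound for \stwoSAT~\cite{wahlstrom_tighter_2008} through Algorithm~\ref{dpllpm}. Concretely, I would start with the ZH-diagram representation of $\phi$ described in Section~\ref{sec:embedding}, then apply Lemma~\ref{dpll_neg_kclause} once to every H-box of degree at least three. Each such application replaces a wide clause by a width-two clause, introducing exactly one new Z-spider carrying a $\pi$-phase. After exhausting all wide clauses, every H-box in the diagram has degree at most two, so the remaining diagram is (up to a global scalar) a $\stwoSAT_\pm$ instance in the sense of our definition, with Z-spiders partitioned into the original $n$ variables and $m_{\geq 3}$ auxiliary $\pi$-phased variables forming the set $N$. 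This is precisely Algorithm~\ref{cdp2pm}.

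Next, I would run $\mathrm{CDP}_\pm$ on the resulting instance. The key step is to argue that Wahlstr\"om's bound, originally proven for \stwoSAT with $n$ variables, transfers to the $\stwoSAT_\pm$ variant with $n + m_{\geq 3}$ variables without any degradation. Following the reasoning already outlined in Section~\ref{subsec:reduction}, I would appeal to two properties of the bound's proof: (i) unrelated subformulae combine multiplicatively, which corresponds to parallel composition of scalar diagrams and is unaffected by the presence of $\pi$-phases; and (ii) the recurrence is established by a case analysis of local subgraph structure under unit propagation and branching. Lemmas~\ref{dpll-neg-branch} and~\ref{dpll-neg-unitprop} show that branching and unit propagation on phased spiders produce the same residual diagrams as their unphased counterparts, up to scalar and sign factors, so the case analysis and its recurrence relations carry over verbatim.

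Combining these two observations gives the runtime $O^*(1.2377^{n + m_{\geq 3}})$ directly, since the $\stwoSAT_\pm$ instance produced by the reduction has exactly $n + m_{\geq 3}$ variables and the same structural complexity as a \stwoSAT instance on that many variables. The ``in particular'' clause then follows by a short arithmetic check: $1.2377^{n + m} < 2^n$ whenever $m/n < \log 2 / \log 1.2377 \approx 2.2503$, and obviously $m_{\geq 3} \leq m$.

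I expect the main obstacle to be the second step, namely the clean transfer of Wahlstr\"om's measure-based branching analysis from \stwoSAT to $\stwoSAT_\pm$. The risk is that Wahlstr\"om's weight function might implicitly rely on the nonnegativity of model counts (e.g.\ when pruning subcases using simplification rules). To handle this rigorously I would inspect each simplification rule used in~\cite{wahlstrom_tighter_2008} and verify that the diagrammatic analogue remains valid when phases are present, noting that every such rule corresponds to a local rewrite whose soundness depends only on the underlying tensor interpretation, not on the sign of the eventual count. Once this verification is complete, the theorem follows immediately.
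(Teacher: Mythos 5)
Your proposal is correct and follows essentially the same route as the paper: apply Lemma~\ref{dpll_neg_kclause} to every wide clause to obtain a $\stwoSAT_\pm$ instance on $n + m_{\geq 3}$ variables, then run $\mathrm{CDP}_\pm$ and transfer Wahlstr\"om's $O^*(1.2377^{\mathrm{variables}})$ bound using the multiplicativity and local-structure arguments of Section~\ref{subsec:reduction}. Your added caution about checking that the measure-based analysis does not rely on nonnegativity of counts is a reasonable refinement of the same argument rather than a different approach.
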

\begin{proof}
    Apply the algorithm $\mathrm{CDP}_\pm^{\to 2}$ to $f$ using Wahlstr\"om's \cite{wahlstrom_tighter_2008} $O^*(1.2377^{\mathrm{variables}})$ algorithm for solving $\stwoSAT$. Then $m_{\geq 3}$ new (negative) variables will be created by applying Lemma \ref{dpll_neg_kclause}, so the overall runtime is given by $O^*(1.2377^{n + m_{\geq 3}})$.
\end{proof}

It remains to ask, when is $\mathrm{CDP}_\pm^{\to 2}$ actually useful? First, note that if only positive variables are picked for branching, the action of $\mathrm{CDP}_\pm$ on a translated $\sSAT$ diagram is \emph{exactly} the same as the action of regular $\mathrm{CDP}$ on the original diagram. Therefore, we would only expect gains when decomposing some of the negative variables (i.e clauses), and thus it is natural to suspect that this bound will only be useful for instances with few clauses.

\subsection{Complexity analysis}
For instances with a fixed maximum density $\delta_{max}$, and assuming the worst-case of $m_{\geq 3} = m$, we have that $m \leq n\delta_{max}$, and so the runtime of $\mathrm{CDP}_\pm^{\to 2}$ is bounded by $O^*(2^{0.3068(1 + \delta_{max})n})$. Firstly, we can see that this is better than the naive $O^*(2^n)$ whenever $\delta_{max} < 2.2503$. Since this is independent of $k$, it means that for any $\delta_{max} < 2.2503$ and sufficiently large $k$, this beats both the worst-case bound of Dubois \cite{dubois_counting_1991} and the average-case bound of Williams \cite{williams_computing_2004}. 

Concretely, $\mathrm{CDP}_\pm^{\to 2}$ is better than the average-case bounds of Williams \cite{williams_computing_2004} whenever $\delta_{max} < 1.217$ and $k \geq 6$, and better than the worst-case bounds of Dubois \cite{dubois_counting_1991} whenever $k \geq 3$ and $\delta_{max} < 1.858$. The exact bounds for each $k$ are given in Table \ref{density_comparison}. Clearly, $\mathrm{CDP}_\pm^{\to 2}$ offers no improvement on $\stwoSAT$, but it is also not directly applicable to $\#\textbf{3SAT}$ - when $\delta < 1.6$, the $O^*(1.4142^{m})$ bound of Zhou et al \cite{zhou_new_2010} is sharper, and when $\delta \geq 1.6$ the $O^*(1.6423^n)$ bound of Kutzkov \cite{kutzkov_new_2007} is sharper.  

For $\SAT$, it has been shown that there is a phase transition in the satisfiability of a random formula as $\delta$ passes some threshold. Instances with densities near this threshold are known to be hard to solve, and it is known that this threshold scales exponentially with $k$ \cite{achlioptas_rigorous_2005}. However, no bound is known for the equivalent `hardest' density in $\sSAT$. Assuming SETH, our result indicates that the `hardest' density of $\sSAT$ must be some $\delta > 2.2503$, since otherwise $\sSAT$ (and hence $\SAT$) could be solved in time better than $O^*(2^n)$.

\subsection{A non-diagrammatic argument for $\mathrm{CDP}_\pm^{\to 2}$}

While we originally found this reduction and algorithm using the ZH-calculus and prefer its diagrammatic presentation, in Appendix \ref{app:nodiagrams} we present a self-contained argument for the reduction from \sSAT to $\stwoSAT_\pm$ without any diagrams that may be more intuitive to readers unfamiliar with graphical calculi.

\section{Variations on the main result}\label{sec:variations}

\subsection{Bounding $\sSAT$ in terms of literals}\label{sec:literal-bound}

Note that $\mathrm{CDP}_\pm^{\to 2}$ maps a $\sSAT$ instance of $m$ clauses to a $\stwoSAT$ instance of at most $L$ clauses (with negative variables), where $L$ is the number of literals in the original instance. Therefore, applying the upper bound of $O^*(1.1740^{m})$ on $\stwoSAT$ found by Wang and Gu \cite{wang_worst_2013}, we can bound the runtime of $\mathrm{CDP}_\pm^{\to 2}$ in terms of literals as $O^*(1.1740^{L})$. This implies a better than $O^*(2^n)$ runtime whenever the average degree $\hat{d} = \frac{L}{n}$ of variables in an instance satisfies $\hat{d} < 4.3209$, or the maximal number of clauses $d$ a variable participates in is $d \leq 4$. As far as we are aware this is the first bound of this type for $\sSAT$ with unrestricted clause width, but similar bounds are known for $\SAT$ - e.g the $O^*(1.0646^L)$ of Peng and Xiao \cite{peng_further_2021}. If $k$ is small, then better bounds for $\sSAT$ follow trivially from bounds in terms of $m$ - e.g for $k = 2$, $m \leq \frac{L}{2}$, so \cite{wang_worst_2013} implies $O^*(1.0835^L)$.

\subsection{An algorithm for low-density $\#\textbf{3SAT}$ instances}\label{low3sat}

In the previous section, we noted that $\mathrm{CDP}^{\to 2}_\pm$ can't beat existing bounds for $\#\textbf{3SAT}$ on its own. This is because we have to assume that $m = m_{\geq 3}$ in the worst-case. However, we can use a technique introduced by Kutzkov \cite{kutzkov_new_2007} to take advantage of the extra structure afforded by $\#\textbf{3SAT}$ and introduce extra branching steps which allow us to assume that $m_{\geq 3} < m$.

Assume we have some $\#\textbf{3SAT}$ instance $f$. Every time we branch on a variable $x$ that occurs in $d$ 3-clauses in $f$, in both branches at least $d$ 3-clauses are eliminated (since each clause will be totally removed in one branch and become a 2-clause in the other). If we know that the density of 3-clauses in $f$ is $\delta_3$, then the average number of 3-clauses a variable is connected to (its average 3-degree) is $3\delta_3$. Therefore, whenever $d - 1 < 3\delta_3 \leq d$, there must exist a variable with 3-degree at least $d$, and so by branching on the variable with the highest 3-degree, we remove at least $d$ 3-clauses. 

Following Kutzkov \cite{kutzkov_new_2007}, suppose $x$ is the number of variables needed to reduce $\delta_3$ to at most $\frac{d - 1}{3}$ by repeatedly branching on the variable with the highest 3-degree. Then we need
\begin{equation}
    \frac{d}{3}n - d x \leq \frac{d - 1}{3}(n - x)
\end{equation}
thus $x \leq \frac{n}{2d + 1}$, so in the limit of large $n$, we have $x = \frac{n}{2d + 1}$ in the worst case, and $n - \frac{n}{2d + 1}$ variables remain unassigned. Therefore, the number of variables we need to branch on to reduce $\delta_3$ to at most $\frac{2}{3}$ is:
\begin{equation}
    \label{n23def}
    n_{2/3} = n - n \prod_{i = 3}^d\left(1 - \frac{1}{2i + 1}\right)
\end{equation}
Since $m_{\geq 3} = \delta_3n$, after performing this branching on $n_{2/3}$ variables, we will have $2^{n_{2/3}}$ instances, each with $m_{\geq 3} \leq \frac{2}{3}(n - n_{2/3})$, so these can be evaluated with $\mathrm{CDP}_\pm^{\to 2}$. This strategy, formalized as Algorithm \ref{cdplow3}, therefore has an overall time bound of:
\begin{equation}
    \begin{aligned}
        O^*(2^{n_{2/3}}) &O^*(1.2377^{(n - n_{2/3}) + m_{\geq 3}}) \leq O^*(2^{n_{2/3}}1.2377^{(n - n_{2/3})(1 + \frac{2}{3})}) = O^*(2^{0.5128n + 0.4872n_{2/3}})
    \end{aligned}
\end{equation}
In order to calculate the running-time bound for a given maximum $\delta_3$, we can plug $d = \lceil3\delta_3\rceil$ into Equation \eqref{n23def} to calculate $n_{2/3}$ as a fraction of $n$. For example, if $\delta_3 < \frac{5}{3}$ then $d = 5$ and $n_{2/3} = 0.3074n$, yielding a time of 
$$O^*(2^{(0.5128 + 0.4872\cdot 0.3074)n}) = O^*(1.5829^n).$$
Suppose then that $\delta = \delta_3$ (i.e the worst-case), then this bound is better than the bound of Zhou \cite{zhou_new_2010} whenever $\delta > 1.2577$ (i.e $d \geq 4$ but not $d = 3$, comparing the $O^*(1.5463^n)$ complexity for $d = 4$ to Zhou's $O^*(1.4142^m)$ to find the exact cutoff point) and the $O^*(1.6423^n)$ bound of Kutzkov \cite{kutzkov_new_2007} whenever $\delta \leq \frac{7}{3}$ (i.e for $d \leq 7$), yielding complexities of $O^*(1.5463^n)$ ($d = 4$) to $O^*(1.6350^n)$ ($d = 7$) respectively. It is possible this bound could be extended to a yield an improved bound on general $\#\textbf{3SAT}$ using a case analysis similar to Kutzkov's, but this is quite complicated so we postpone exploring this to future work.

\begin{algorithm}[t]
    \DontPrintSemicolon
    \caption{The algorithm $\mathrm{CDP}_\pm^{3 \to 2}$ for solving $\#\textbf{3SAT}$. \label{cdplow3}}
    \KwIn{A CNF formula $f$ with $n$ variables and $m$ clauses.}
    \KwOut{The value of $\sSAT(f)$.}
    \eIf{$\delta_3 > \frac{2}{3}$}{
        Pick $x$ in $f$ with maximal 3-degree.\;
        $f_1 \gets \operatorname{Unit-Propagate}(f \land \lnot x_i)$\;
        $f_2 \gets \operatorname{Unit-Propagate}(f \land x_i)$\;
        \Return{$\mathrm{CDP}_\pm^{3 \to 2}(f_1) + \mathrm{CDP}_\pm^{3 \to 2}(f_2)$}
    }{
        \Return{$\mathrm{CDP}^{\to 2}_\pm(f)$}\;
    }
\end{algorithm}

\subsection{Solving \sSAT with arbitrary phases}\label{sec:SAT-phases}

While allowing $\pi$-phases on variables has allowed us to find a simple reduction from $\sSAT$ to $\stwoSAT_\pm$ which can be solved with $\mathrm{CDP}_\pm$, the CDP algorithm easily extends to arbitrary phases in the same way. Indeed let us define a generalization of the $\sSAT$ problem, $\sSAT_\mathcal{A}$ - this is exactly the weighted model counting problem where the weights are restricted to $\mathcal{A}$.

\begin{definition}
    The problem $\sSAT_\mathcal{A}$ for $\mathcal{A} \subseteq \mathbb{C} \setminus \{0\}$ is defined as follows. Given a CNF formula $f(x_1, \dots)$ with $n$ variables and $m$ clauses and a vector $A \in \mathcal{A}^n$, compute the quantity:
    \begin{equation}
        \sSAT_\mathcal{A}(f, A) = \sum_{\vec{x} \in \mathbb{B}^n} \left(\prod_{i = 1}^n A_i^{x_i}\right) f(\vec{x})
    \end{equation}
\end{definition}

It is easy to see then that $\sSAT = \sSAT_{\{1\}}$, and $\sSAT_\pm = \sSAT_{\{1, -1\}}$: let $A_j = -1$ if $j \in N$ and $A_j = 1$ otherwise. 
A straightforward adaptation of the CDP algorithm can solve $\sSAT_\mathcal{A}$; see Algorithm~\ref{cdpphase}.

The complex numbers in $\mathcal{A}$ on the variables of an instance can be easily represented in the ZH-calculus, by generalizing Eq.~\eqref{eq:zspiderphase}.
The variable branching and unit propagation rules then generalize from Eq.~\eqref{zspidersum}. The advantage of working with $\sSAT_\mathcal{A}$ is that by expanding $\mathcal{A}$, we are afforded additional rewriting rules on the corresponding ZH-diagrams. Moving from $\{1\}$ to $\{1, -1\}$ allowed us to rewrite arbitrary arity zero-labeled H-boxes into arity two H-boxes. Further expanding this to $\{\frac{k}{2} \mid k \in \mathbb{Z}\}$ allows us the following rule, removing (up to a scalar) any variables that only occur once in a formula:
\begin{equation}
    \label{extension_absorb}
    \tikzfig{extension-absorb}
\end{equation}
This is a limited form of pure-literal elimination, a rewrite rule that is usually only valid in $\SAT$ and not $\sSAT$. Applying this simplification to the formula recursively (after unit propagation in the CDP algorithm, i.e between lines 7 and 8 of Algorithm \ref{cdp_algo}), we may assume that every variable has degree at least two. This would allow improvement on early bounds such as \cite{dubois_counting_1991}, while being much simpler. Therefore, an interesting avenue for further research would be investigating how this approach of weighting variables could be used to simplify $\sSAT$ instances or find upper bounds on runtime. Another example is given in Appendix~\ref{sec:circuit-sim} where we show how this leads to an algorithm for simulating quantum circuits with runtime in terms of the total number of gates.

\setlength{\textfloatsep}{1.5mm}
\begin{algorithm}[t]
    \DontPrintSemicolon
    \caption{The $\mathrm{CDP}_\mathcal{A}$ algorithm for solving $\sSAT_\mathcal{A}$. \label{cdpphase}}
    \KwIn{A CNF formula $f$ with $n$ variables and $m$ clauses, and $A \in \mathcal{A}^n$.}
    \KwOut{The value of $\sSAT_\mathcal{A}(f, A)$.}

    \uIf{$f$ contains an empty clause}{
        \Return{$0$}
    }\uElseIf{$f$ has no clauses}{
        \Return{$2^n$}
    }\Else{
        Pick $i \in \{1, \dots, n\}$ according to some strategy.\;
        $f_1 \gets \operatorname{Unit-Propagate}(f \land \lnot x_i)$\;
        $f_2 \gets \operatorname{Unit-Propagate}(f \land x_i)$\;
        \Return{$\mathrm{CDP}_\mathcal{A}(f_1, A) + A_i\mathrm{CDP}_\mathcal{A}(f_2, A)$}
    }
\end{algorithm}
\vspace{-1mm}

\section{Conclusion}\label{sec:conclusion}

In this paper, we used the ZH-calculus to study the $\sSAT$ problem and produced an upper bound which does not depend on the clause width $k$. We believe bounds of this kind were previously only known for the decision variant $\SAT$ \cite{yamamoto_improved_2005-1}. The bound is less than $O^*(2^n)$ whenever the clause density $\delta=\frac{n}{m}$ is smaller than $2.2503$, suggesting that the `hardest' density of $\sSAT$ problems must be some $\delta > 2.2503$, assuming the strong exponential time hypothesis. We found these bounds by rephrasing the $\sSAT$ problem in terms of ZH-diagrams, and generalizing known rewrite rules to give a reduction from $\sSAT$ to $\stwoSAT_\pm$, a weighted variant of $\sSAT$ that can be solved with Wahlstr\"om's \cite{wahlstrom_tighter_2008} variant of the CDP algorithm \cite{birnbaum_good_1999}. 
Using a more involved analysis and algorithm we also improved on the upper bound for $\#\textbf{3SAT}$ for $1.2577 < \delta < \frac{7}{3}$.
In addition, using a result of Wang and Gu~\cite{wang_worst_2013}, we produced an explicit bound of $O^*(1.1740^L)$ for $\sSAT$ in terms of the number of literals $L$, to our knowledge the first such non-trivial bound for $\sSAT$. A summary of all the bounds obtained in this paper is presented in Table~\ref{boundtable}. We suggest extending this technique of reducing $\sSAT$ to weighted $\sSAT$ as an avenue of future research.

Our results show that graphical calculi can lead to concrete algorithmic improvements in areas where significant research has already been done, even when originally intended for a different domain like quantum computing. An interesting question then is in which other domains we can make improvements by framing the problem using graphical reasoning.

\section*{Acknowledgments} We thank Matty Hoban and the anonymous QPL reviewers for helpful feedback. The majority of this work was done while TL was a student at the University of Oxford, and the main result is also presented in an MSc thesis~\cite{laakkonen_graphical_2022}. JvdW acknowledges funding from the European Union's Horizon 2020 research and innovation programme under the Marie Sk{\l}odowska-Curie grant agreement No 101018390.


%
%
%
%

\bibliographystyle{eptcs}
\bibliography{bibliography}

\end{document}